\documentclass[12pt]{article}
\emergencystretch=15pt  %%% Does away with most overfull \hboxs

%%%%%%%%%%%%%%%%%%%%%%%%%%%%
\usepackage[mathscr]{eucal}
\usepackage{amssymb,latexsym}
\usepackage{verbatim}
%\usepackage{showkeys}
%\usepackage{graphicx}
%\usepackage{pdfsync}
%\usepackage{epstopdf}
%\usepackage{citesort}
%\sloppy
\usepackage{amsmath}
\usepackage{amsthm}
\usepackage{enumerate}
\usepackage{authblk}
\usepackage{color}
\usepackage{url}

\usepackage[normalem]{ulem}

\setlength{\textwidth}{6in}
\setlength{\oddsidemargin}{.27in}
\setlength{\evensidemargin}{.27in}
\setlength{\topmargin}{-.3in}
\setlength{\textheight}{8.6in}

%\numberwithin{equation}{section}

\newtheorem{thm}{Theorem}%[section]
%[section]
%[section]
\newtheorem{prop}[thm]{Proposition}%[section]
%[section]
%[section]
%[section]
%\newcommand{\qed}{\hfill $\Box$ \medskip}
%\newcommand{\proof}{\noindent \emph{Proof:\ }}

% Added

\renewcommand\l{\lambda}

\renewcommand\d{\partial}

\renewcommand\b{\beta}

\newcommand\ric{{\rm Ric}}
\renewcommand\l{\lambda}
\newcommand\g{\gamma}

\renewcommand\a{\alpha}

\newcommand\beq{\begin{equation}}
\newcommand\eeq{\end{equation}}
\newcommand\ben{\begin{enumerate}}
\newcommand\een{\end{enumerate}}
\newcommand\bit{\begin{itemize}}
\newcommand\eit{\end{itemize}}

%%%%%%%%%%%%%%%%%%%%%%%%%%%%%%%%
%%%%%%%%%%% Marcos

%\newcommand{\de} {\delta}\newcommand{\De}{\Delta}

%\newcommand{\vep}{\varepsilon}

%\newcommand{\om} {\omega}\newcommand{\Om}{\Omega}

%%%%%%%%%%%%%%%%%%%%%%%%%%%%%%%%%%%%%%%%%%%%%%%%%%%%%
%%%%%%%%%%% Fernando

\newcommand{\R}{\mathbb R}

%\newcommand{\D}{\mathcal D}

% Additional New Commands added by Eric

\newcommand{\pd}{\partial}

\newcounter{mnotecount}

\setcounter{equation}{0}

\title{
Existence of CMC Cauchy surfaces from a spacetime curvature condition
}

\author{Gregory J. Galloway\thanks{Research partially supported by NSF grant DMS-1710808.} }
%\author{Gregory J. Galloway}
\author{Eric Ling}
\affil{Department of Mathematics
\\ University of Miami }

\begin{document}
\date{}
\maketitle
\vspace{.2in}

\begin{abstract}  

In this note we present a result establishing the existence of a compact CMC Cauchy surface from a curvature condition related to the strong energy condition.

\end{abstract}

\section{Introduction}
\label{intro}

Constant mean curvature (CMC) spacelike hypersurfaces have played an important role in mathematical general relativity.  In particular, as is well-known, the problem of finding solutions to the Einstein constraint equations  is made much simpler by assuming CMC initial data.  There are also many known advantages for solving the Einstein evolution equations if one works in CMC gauge, which gives rise to a CMC foliation.  Solving the Einstein equations by this approach usually requires, to begin with, a CMC initial data hypersurface (see e.g. \cite{AM,RS}).

In the recent paper \cite{DH}, Dilts and Holst review the issue of the existence of CMC slices in globally hyperbolic spacetimes with compact Cauchy surfaces.   As discussed in \cite{DH}, most such existence results ultimately rely on barrier methods.  However, a well-known example of Bartnik \cite{Bartnik} shows that not all cosmological spacetimes have CMC Cauchy surfaces.  Vacuum examples were later obtained by Chru\'sciel, Isenberg and Pollack \cite{CIP} using gluing methods.  These examples share certain properties.  By examining various features of Bartnik's example, Dilts and Holst formulate several conjectures concerning the existence of CMC Cauchy surfaces.  We  do not settle any of these conjectures here.   Nevertheless, motivated by some of their considerations, we have obtained a new CMC existence result which relies on a certain spacetime curvature condition.

\medskip

\begin{thm}\label{main}
Let $(M,g)$ be a spacetime with compact Cauchy surfaces. Suppose $(M,g)$ is future timelike geodesically complete and has everywhere nonpositive timelike sectional curvatures, i.e. $K \leq 0$ everywhere. Then $(M,g)$ contains a CMC Cauchy surface.
\end{thm}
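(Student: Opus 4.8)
The plan is to translate the curvature hypothesis into a focusing (Raychaudhuri) inequality, use it together with future completeness to manufacture mean-curvature barriers, and then invoke a standard existence theorem for constant-mean-curvature hypersurfaces trapped between ordered barriers.

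First I would record the geometric content of $K\le 0$. For a future-directed unit timelike $u$ and a unit spacelike $X\perp u$, the timelike sectional curvature is, up to the standard normalization, $-\langle R(X,u)u,X\rangle$, so $K\le 0$ is equivalent to the tidal operator $X\mapsto R(X,u)u$ being positive semidefinite; taking a trace gives the strong energy condition $\Ric(u,u)\ge 0$. Consequently, along the future unit-normal geodesic congruence issuing from any spacelike hypersurface, the mean curvature $H$ (equal to the expansion) obeys the Raychaudhuri inequality $\tfrac{dH}{d\tau}\le -\tfrac{H^2}{n}-\Ric(u,u)\le -\tfrac{H^2}{n}$, where $n=\dim\Sigma$. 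Integrating this Riccati inequality yields the Hawking-type bound $H(\tau)\le n/\tau$ along every geodesic that remains free of focal points, independently of the initial value of $H$. Because the strong energy condition is insensitive to time orientation, the same focusing holds for past-directed normal congruences, where it forces $H$ to be \emph{increasing} toward the past.

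Next I would build the two barriers. For the future barrier, fix a smooth Cauchy surface $S$ and flow it to the future; future timelike geodesic completeness lets the flow (or, to absorb caustics, the level sets of the Lorentzian distance from $S$) run to arbitrarily large proper time $\tau$. The estimate $H\le n/\tau$ shows that for $\tau$ large the resulting Cauchy surface $\Sigma_+$ has mean curvature $\le H_0$ everywhere for any preassigned $H_0>0$; at caustics $H\to-\infty$, which only strengthens the one-sided barrier inequality. The same Hawking constraint records that, since the spacetime is future complete, no Cauchy surface can have $H<0$ everywhere, so every admissible constant satisfies $H_0\ge 0$. With ordered barriers $\Sigma_-\le\Sigma_+$ straddling $H_0$ in hand, I would apply Gerhardt's existence theorem for prescribed-mean-curvature hypersurfaces between barriers to obtain a compact hypersurface with $H\equiv H_0$ lying between them, and then check it is a Cauchy surface (being achronal, compact, and homologous to $S$ in a globally hyperbolic spacetime with compact Cauchy surfaces).

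The main obstacle is the \emph{past} barrier: I need a Cauchy surface $\Sigma_-$, to the past of $\Sigma_+$, with $H\ge H_0$ everywhere. Past focusing gives $\tfrac{dH}{ds}\ge \tfrac{H^2}{n}$ along the past flow, so $H$ is nondecreasing toward the past and blows up where it is positive; but the comparison solution shows that regions where $H\le 0$ may merely increase toward $0$ without ever exceeding $H_0$, so one cannot in general raise $\min_\Sigma H$ above a positive constant simply by flowing. Resolving this is the crux: I expect to exploit the spacetime's past behavior—either a past crushing singularity furnishing a weak barrier with arbitrarily large $H$, or, failing that, a rigidity argument in the spirit of the Lorentzian splitting theorem producing a totally geodesic, hence maximal, Cauchy surface and thereby handling the limiting value $H_0=0$ directly. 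Making this dichotomy precise, and reconciling the weak (support-sense) barriers at caustics with the hypotheses of the existence theorem, is where the real work lies.
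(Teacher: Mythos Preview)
Your proposal has a genuine gap, and it stems from the fact that after the first paragraph you never again use the hypothesis $K\le 0$ beyond its Ricci trace. Everything you do with Raychaudhuri, the Hawking bound $H\le n/\tau$, and the past focusing inequality requires only the strong energy condition $\Ric(u,u)\ge 0$ together with future completeness. But the paper explicitly formulates as an \emph{open conjecture} the statement that compact Cauchy surfaces, future timelike completeness, and the strong energy condition alone suffice for a CMC Cauchy surface, and notes that this would settle Bartnik's splitting conjecture. So if your barrier scheme could be completed using only what you have actually invoked, you would have proved something substantially stronger than the theorem at hand.

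The concrete failure is exactly where you locate it: the past barrier. Your dichotomy---``either past crushing singularity or a splitting-type rigidity''---is not forced by the hypotheses. There is no assumption on the past, so no crushing need occur; and the Lorentzian splitting theorem requires a timelike line (equivalently, two-sided completeness or some compactness of causal diamonds), which you have not produced. The Riccati comparison you wrote down shows precisely why pushing to the past cannot manufacture $H\ge H_0>0$ from a region where $H\le 0$. The paper avoids this entirely by a different mechanism: the full sectional curvature bound $K\le 0$, via Harris's Lorentzian triangle comparison and the Ehrlich--Galloway result on timelike lines, forces every future-inextendible null geodesic to enter its own chronological future; this rules out nontrivial achronal boundaries $\partial I^-(\gamma)$ and hence shows the future causal boundary is a single element. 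From there, Tipler's observation gives a point $p$ with $M\setminus(I^+(p)\cup I^-(p))$ compact, and Bartnik's theorem (which internally carries out the barrier-and-existence step you were aiming for) finishes the proof. The place where $K\le 0$ does real work---controlling the global causal structure rather than just the focusing of congruences---is missing from your outline.
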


Some remarks about the curvature assumption are in order.  Recall, for any timelike $2$-plane, $T \subset T_pM$, the timelike sectional curvature $K(T)$ is given by
\begin{equation}
K(T) = -g\big(R(u,e)e,u\big) = -\langle R(u,e)e,u \rangle, 
\end{equation}
where $\{u,e\}$  is any basis for $T$ with $g(u,u) = -1$ and $g(e,e) = 1$ and $R$ is the Riemann curvature tensor. In particular, $K(T)$ is independent of the orthonormal basis chosen.  
(Our sign convention for $R$ is that of \cite{BEE} and opposite that of \cite{ON}.)
Standard analysis of the Jacobi equation shows that $K \le 0$ physically corresponds to attractive tidal forces; i.e. it describes gravitational attraction in the strongest sense.  

The Ricci tensor evaluated on a unit timelike vector $u \in T_pM$ can be expressed as minus the sum of timelike sectional curvatures.  Specifically, let $\{u,e_1, \dotsc, e_n\}$ be an orthonormal basis for $T_pM$ with $g(u,u) = - 1$. Let $T_i \subset T_pM$ be the timelike plane spanned by $\{u,e_i\}$. Then 
\begin{equation}\label{Ric for K}
\text{Ric}(u,u) = \sum_{i = 1}^n \langle R(u,e_i)e_i,u\rangle = -\sum_{i = 1}^n K(T_i).
\end{equation}
In particular the assumption of nonpositive timelike sectional curvatures implies the strong energy condition, $\ric(U,U) \ge 0$ for all timelike vectors $U$.  As shown in Section~3, for FLRW spacetimes,  the assumption of nonpositive timelike sectional curvatures is equivalent to the strong energy condition.  In particular, sufficiently small perturbations of FLRW spacetimes which obey the strong energy condition {\it strictly} will have negative timelike sectional curvatures.

\medskip

Since the assumption of nonpositive timelike sectional curvatures implies the strong energy condition, one is naturally led to formulate the following conjecture. 

\medskip

\noindent{\bf Conjecture.} \emph{Let $(M,g)$ be a spacetime with compact Cauchy surfaces. If $(M,g)$ is future timelike geodesically complete and satisfies the strong energy condition, i.e. $\emph{Ric}(U,U) \geq 0$ for all timelike $U$, then $(M,g)$ contains a CMC Cauchy surface.}

\medskip

The conjecture, if correct, is not likely to be easy to prove.  In particular, it would settle the Bartnik splitting conjecture  \cite[Conjecture 2]{Bartnik} in the affirmative; see \cite[Corollary 1, p.\ 621]{Bartnik}.  The conjecture above is, in a certain sense, complimentary to Conjecture 3.5 in \cite{DH}.  In this context,  it would be interesting to resolve the issue of the timelike completeness/incompleteness of the examples constructed in \cite[Section 5.1]{CIP}.

\medskip

\section{Proof of Theorem \ref{main}}

In this note we consider globally hyperbolic spacetimes $(M,g)$, $\dim M \ge 4$, with compact Cauchy surfaces. Following the convention in O'Neill \cite{ON},  we define a \emph{Cauchy surface} to be a subset $S \subset M$ which is met by every inextendible timelike curve exactly once \cite[p. 415]{ON}.

The key result underlying the proof of Theorem \ref{main} is the following fundamental CMC existence result of Bartnik \cite{Bartnik}.

\begin{thm}[Bartnik \cite{Bartnik}]\label{Bartnik thm}
Let $(M,g)$ be a globally hyperbolic spacetime with compact Cauchy surfaces that satisfies the strong energy condition. If there is a point $p \in M$ such that $M \setminus \big(I^+(p)\cup I^-(p)\big)$ is compact, then there is a regular ($C^{2,\a}$) CMC Cauchy surface passing through $p$.
\end{thm}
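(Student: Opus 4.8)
The plan is to realize the surface as the solution of a quasilinear elliptic equation and to solve it by the method of continuity, drawing the two essential ingredients---a priori estimates and nondegeneracy of the linearization---respectively from the compactness hypothesis and from the strong energy condition. By global hyperbolicity with compact Cauchy surfaces, $M$ splits as $\R\times\Si_0$ for a compact Cauchy surface $\Si_0$ and a Cauchy temporal function $t$, which I normalize so that $t(p)=0$. Every spacelike hypersurface homologous to $\Si_0$ is then a graph $\{t=u(x)\}$, and its mean curvature is a quasilinear operator $\mathcal H[u]$, elliptic exactly where the graph is spacelike, i.e. where the Lorentzian tilt factor $v\sim(1-|\D u|^2)^{-1/2}$ is finite. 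Since $\Si_0$ is compact, the CMC equation $\mathcal H[u]=H$ is a closed elliptic problem with no boundary data, and the goal is a solution whose graph passes through $p$.

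Next come the barriers, where the compactness of $K:=M\ssm(I^+(p)\cup I^-(p))$ is used. The achronal boundaries $\partial I^+(p)$ and $\partial I^-(p)$ are Lipschitz hypersurfaces meeting exactly at $p$ and bounding $K$; slightly smoothing them produces a spacelike lower barrier $u^-$ and upper barrier $u^+$ with $u^-\le u^+$ and $u^\pm(p)=0$, satisfying the appropriate mean-curvature inequalities. Compactness of $K$ keeps these barriers away from the future and past ends of $M$, yielding a uniform height bound $u^-\le u\le u^+$ for any solution; and because the barriers pinch together at $p$, a solution trapped between them is forced through $p$. This furnishes the $C^0$ estimate and enforces the geometric constraint at once.

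The analytic heart is the a priori gradient (tilt) estimate: a solution confined by the barriers must stay uniformly spacelike, i.e. $v$ bounded, so that the equation does not degenerate to a null limit. This is Bartnik's estimate, and it is exactly where the strong energy condition is indispensable---differentiating $v$ along the graph and invoking the Gauss--Codazzi relations together with the Riccati (Raychaudhuri) equation for the normal geodesic congruence, the focusing term $\ric(\nu,\nu)+|A|^2\ge0$ (nonnegative by the energy condition) controls the growth of $v$. With $C^0$ and $C^1$ bounds in hand the operator is uniformly elliptic on the solution class, and Schauder theory upgrades these to the $C^{2,\a}$ bounds in the statement. I would then close a continuity argument in $H$ (or along a homotopy of equations): closedness is precisely the a priori estimates just described, while openness follows from the implicit function theorem, since the linearization of $\mathcal H$ is a Jacobi-type operator whose zeroth-order potential is built from $\ric(\nu,\nu)+|A|^2$, and under the strong energy condition this has exactly the sign for which the Lorentzian maximum principle applies (the same mechanism behind uniqueness of CMC slices), so the linearization is invertible and the solution nondegenerate. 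I expect the gradient estimate to be the main obstacle: the barriers are comparatively soft and control only the height, whereas preventing a spacelike graph from tipping toward the light cone is a genuinely Lorentzian difficulty with no Riemannian counterpart, and it is precisely here that the energy condition must enter in an essential, quantitative way.
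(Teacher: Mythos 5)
You should note first that the paper does not prove this theorem at all: it is imported verbatim from Bartnik's paper \cite{Bartnik} and used as a black box, so your attempt can only be compared with Bartnik's original argument. Your outline does capture the right family of techniques (graphs over a compact Cauchy surface, the compact set $K = M \setminus \big(I^+(p)\cup I^-(p)\big)$ supplying $C^0$ control, an interior gradient estimate, Schauder theory), but two of your steps have genuine gaps. The first is the barrier construction. The achronal boundaries $\partial I^{\pm}(p)$ are \emph{null} hypersurfaces, in general only Lipschitz (null cut points appear away from $p$), and they cannot simply be ``slightly smoothed'' into spacelike hypersurfaces satisfying the mean-curvature inequalities appropriate to the constant $H$ --- a constant which, moreover, is not known in advance; the null cones act as barriers for every $H$ only in a weak support sense, because the mean curvature of any spacelike approximation blows up (with a definite sign) near the cone, and making this precise is exactly the technical content of Bartnik's weak-barrier existence theory. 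Worse, once you smooth the barriers they no longer pinch at $p$, so your mechanism for forcing the solution through $p$ evaporates. Bartnik instead obtains CMC surfaces for a range of values of $H$, uses the strong energy condition via the maximum principle to order this family, and then runs an intermediate-value type argument to select the $H$ whose surface passes through $p$. (Note also that the $C^0$ bound needs no barriers at all: any achronal spacelike hypersurface through $p$ lies in $K \cup \{p\}$ by achronality, so compactness of $K$ traps all candidate surfaces directly.)

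The second gap is your openness step. Under the strong energy condition the zeroth-order coefficient $|A|^2 + \mathrm{Ric}(\nu,\nu)$ of the linearized operator is only \emph{nonnegative}, and it can vanish identically --- for instance on a totally geodesic maximal slice of a flat spacetime --- in which case the constants lie in the kernel and the implicit function theorem fails. So a continuity method in $H$ requiring invertibility of the linearization at every step breaks down precisely in the degenerate case; this is why Bartnik (and Gerhardt, on whose barrier existence theorem this part rests) argue by direct, variational, or sub/supersolution methods rather than by a continuity path. A smaller inaccuracy: in Bartnik's route the interior gradient estimate is a purely local result requiring only bounds on the local geometry and no energy condition; the mechanism you describe, in which $|A|^2 + \mathrm{Ric}(\nu,\nu) \ge 0$ controls the tilt factor, is the one that appears in the later parabolic proofs of Ecker--Huisken, not in Bartnik's elliptic argument, where the energy condition instead enters through the maximum-principle comparisons just described.
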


The proof of Theorem \ref{main} also makes use of the notion of the causal boundary of a spacetime.  Let $(M,g)$ be a globally hyperbolic spacetime. A \emph{past set} $P \subset M$ is a set such that $I^-(P) = P$. A past set $P$ is \emph{indecomposable} if $P$ cannot be expressed as the union of two past sets which are proper subsets of $P$. For any $p \in M$, the set $I^-(p)$ is an indecomposable past set. If $P$ is an indecomposable past set and there is no $p \in M$ such that $P = I^-(p)$, then $P$ is called a \emph{terminal indecomposable past} set or TIP for short. \cite[Proposition 6.8.1]{HE} shows that $P$ is a TIP if and only if there is a future inextendible timelike curve $\g$ such that $P = I^-(\g)$. The set $\mathscr{C}^+$ of all TIP's is called the \emph{future causal boundary} of $M$. The \emph{past causal boundary} $\mathscr{C}^-$ is defined time dually.

Tipler \cite{Tipler} made the very nice observation that if the future causal boundary consists of a single point (hence $I^-(\g)  = M$ for all future inextendible timelike curves~$\g$) then the key condition in Bartnik's theorem is satisfied.  In fact, Tipler discusses  somewhat more general results, requiring somewhat more involved arguments.  For the convenience of the reader, we give a simple direct proof of the following.

\smallskip

\begin{prop}[Tipler \cite{Tipler}]\label{Tipler CMC thm}
Let $(M,g)$ be a spacetime with  compact Cauchy surfaces. If $\mathscr{C}^+$ consists of a single point, then there is a point $p \in M$, sufficiently far to the future, such that $M \setminus \big(I^+(p) \cup I^-(p)\big)$ is compact.
\end{prop}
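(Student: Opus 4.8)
The plan is to place $p$ on a well-chosen future-inextendible timelike curve, far enough to the future that its past engulfs the bottom of the spacetime, and then to show that the single-point hypothesis forces its future to engulf the top; the set $M\setminus(I^+(p)\cup I^-(p))$ then lies in a compact slab between two Cauchy surfaces. Throughout I would fix a Cauchy temporal function $t\colon M\to\mathbb{R}$, giving a diffeomorphism $M\cong\mathbb{R}\times S$ with $S$ compact, the slices $S_\tau:=t^{-1}(\tau)$ compact Cauchy surfaces, and the vertical curves $\tau\mapsto(\tau,x)$ future-directed timelike. Since $M\setminus(I^+(p)\cup I^-(p))$ is closed and the slices are compact, it suffices to trap this set between two slices, i.e. to bound $t$ on it from below and from above.

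For the lower bound, recall that ``$\mathscr{C}^+$ a single point'' means $I^-(\g)=M$ for every future-inextendible timelike $\g$. Fix one such $\g$ and the slice $S_{-1}$. As $S_{-1}$ is compact and $S_{-1}\subset I^-(\g)=\bigcup_s I^-(\g(s))$, an increasing union of open sets, we get $S_{-1}\subset I^-(\g(s_0))$ for some $s_0$; set $p:=\g(s_0)$ (this is the sense in which $p$ is ``sufficiently far to the future''). Since $I^-(p)$ is a past set, $S_{-1}\subset I^-(p)$ yields $J^-(S_{-1})\subset I^-(p)$, and because $\{t<-1\}=I^-(S_{-1})\subset J^-(S_{-1})$ we get $\{t<-1\}\subset I^-(p)$. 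Hence $M\setminus(I^+(p)\cup I^-(p))\subset M\setminus I^-(p)\subset\{t\ge-1\}$.

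The upper bound is the heart of the matter, and I expect it to be the main obstacle: the single-point hypothesis must now be exploited for the \emph{future} of $p$, and the naive route (common future points of $p$ and a far-future point $q$) never delivers $q\in I^+(p)$. I would argue by contradiction. Suppose no future slice lies in $I^+(p)$. Then $P:=M\setminus I^+(p)$, a closed past set, meets every slice $S_n$, so choose $q_n=(n,x_n)\in P$; by compactness of $S$ pass to a subsequence with $x_n\to x_*$ and set $\sigma(\tau):=(\tau,x_*)$, a future-inextendible timelike curve. I claim $\sigma$ never enters $I^+(p)$: if $\sigma(\tau_0)\in I^+(p)$ then, $I^+(p)$ being open, also $(\tau_0,x_n)\in I^+(p)$ for large $n$, and running up the timelike vertical from $(\tau_0,x_n)$ to $q_n=(n,x_n)$ would place $q_n\in I^+(p)$, contradicting $q_n\in P$. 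Therefore $I^-(\sigma)\subset P\subsetneq M$. But $\sigma$ is future-inextendible timelike, so by \cite[Proposition 6.8.1]{HE} its past $I^-(\sigma)$ is a TIP, and the single-point hypothesis forces $I^-(\sigma)=M$ --- a contradiction. Hence some slice $S_{T_2}\subset I^+(p)$, whence $\{t>T_2\}=I^+(S_{T_2})\subset I^+(p)$, giving $M\setminus(I^+(p)\cup I^-(p))\subset\{t\le T_2\}$.

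Combining the bounds, $M\setminus(I^+(p)\cup I^-(p))$ is a closed subset of the compact slab $t^{-1}([-1,T_2])$, hence compact. The one point to keep straight is that the \emph{same} $p$ serves both halves: the lower bound fixed $p$ far to the future, while the upper-bound argument used only the single-point hypothesis and compactness of the slices and therefore applies to every $p$. Thus there is no tension between ``far to the future'' and the future-capturing step, and the proof closes.
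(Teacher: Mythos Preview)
Your argument is correct and follows the same overall strategy as the paper: choose $p$ far enough to the future that a compact Cauchy surface lies in $I^-(p)$, show that some compact Cauchy surface also lies in $I^+(p)$, and trap $M\setminus(I^+(p)\cup I^-(p))$ between the two. The implementations differ. The paper works with achronal boundaries rather than a temporal-function product: it proves that $\partial I^-(p)$ (for the chosen $p$) and $\partial I^+(p)$ (for \emph{every} $p$) are compact Cauchy surfaces, and then identifies $M\setminus(I^+(p)\cup I^-(p))$ with the causal diamond $J^+(\partial I^-(p))\cap J^-(\partial I^+(p))$, compact by global hyperbolicity. Its ``Claim 2'' is shorter than your limit construction: for any inextendible timelike $\lambda$, the single-point hypothesis gives $p\in I^-(\lambda)$ directly, so $\lambda$ must enter $I^+(p)$ and therefore cross $\partial I^+(p)$. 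Your route via the product structure and the limiting vertical curve $\sigma$ reaches the same conclusion (indeed, your contradiction $I^-(\sigma)\subsetneq M$ is exactly the contrapositive of the paper's one-line observation), but with an extra compactness/limit step; in exchange you avoid the causal-diamond identification and get compactness from the explicit slab $[-1,T_2]\times S$. Both approaches are sound; the paper's is a little more streamlined, while yours makes the trapping completely concrete.
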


The proof is a consequence of the following two claims.

\medskip
\noindent
{\bf Claim 1. }{\it 
%\begin{lem}\label{lemma 1}
Let $(M,g)$ be a spacetime with compact Cauchy surfaces. If $\mathscr{C}^+$ consists of a single point, then there is a point $p \in M$  such that $\pd I^-(p)$ is a Cauchy surface.}
%\end{lem}

\proof
Let $S$ be a Cauchy surface and $\g \colon [0,\infty) \to M$ be a future inextendible timelike curve to the future of $S$. Put $p_t = \g(t)$. Since $\mathscr{C}^+$ consists of a single point, we have $M = I^-(\g)$. Therefore $\{I^-(p_t)\}_{t \in [0,\infty)}$ is an open cover of $S$. Since $S$ is compact, there is a finite subcover $\{I^-(p_{t_1}), \dotsc, I^-(p_{t_N})\}$ with $t_1 < \dotsb < t_N$. Put $p = p_{t_N}$. Note that $I^-(p_{t_i}) \subset I^-(p)$ for all $i = 1, \dotsc, N$. Therefore $S \subset I^-(p)$. Set $B = \pd I^-(p)$. Let $\lambda$ be any inextendible timelike curve. It suffices to show $\l$ intersects $B$.  Since $\l$ intersects $S$ and $S \subset I^-(p) = I^-(B)$, we know that $\l$ intersects $I^-(B)$. Also $\l$ meets $I^+(B)$ because $p \in M = I^-(\l)$.  
By the achronality of $B$, $I^-(B) \cap I^+(B) = \emptyset$. Thus, a segment of $\l$ begins in $I^-(B)$ and ends outside $I^-(B)$. It follows that $\lambda$ meets
$\d I^-(B) = 
%\d I^-(\d I^-(p)) = 
\d I^-(p) = B$, and only at one point. 
Hence $B = \pd I^-(p)$ is a Cauchy surface.  \qed

\medskip
The  following claim was first considered  in \cite{BS} (without proof). 

\medskip
\noindent
{\bf Claim 2. }
{\it Let $(M,g)$ be a spacetime with compact Cauchy surfaces. If $\mathscr{C}^+$ consists of a single point, then $\pd I^+(p)$ is a Cauchy surface for any point $p \in M$.}

\proof  Let $p \in M$, and put $B = \pd I^+(p)$.  Let $\l\colon \R \to M$ be an inextendible timelike curve.   
 Let $S$ be a Cauchy surface through $p$.  Suppose $\lambda$ meets $S$ at the point $q$. Then, by the achronality of $S$, $q \notin I^+(B)$. However, since $\mathscr{C}^+$ consists of a single point, we have $p \in I^-(\l)$, and so $\l$ meets $I^+(B)$.  Thus, a segment of $\l$ begins outside of $I^+(B)$ and ends in $I^+(B)$.  Hence, as in the proof of Claim 1, $\l$ meets $B$. \qed

\medskip

\noindent\emph{Proof of Proposition \ref{Tipler CMC thm}.}
By Claims 1 and 2, 
there is a point $p \in M$ such that $B^- := \pd I^-(p)$ and $B^+ := \pd I^+(p)$ are compact Cauchy surfaces.  Then it follows from the compactness of `causal diamonds'  for globally hyperbolic spacetimes that $J^+(B^-) \cap J^-(B^+)$ is compact; see the corollary on p.\ 207 in \cite{HE}.   
 Moreover, it can easily be seen that $M \setminus \big(I^+(p) \cup I^-(p)\big) = J^-(B^+) \cap J^+(B^-)$: 
 The reverse inclusion holds, as otherwise one would have either  $I^+(p) \cap \d I^+(p) \neq  \emptyset$ or $I^-(p) \cap \d I^-(p) \neq \emptyset$.  For the forward inclusion, note that $B^+$ is a Cauchy surface. Therefore it separates $M$ into $I^+(B^+) = I^+(p)$ and $I^-(B^+)$. Therefore if $q \notin I^+(p)$, then $q \in B^+ \cup I^-(B^+) \subset J^-(B^+)$.\qed

\medskip

The role of the curvature assumption in Theorem \ref{main} now enters via the following proposition, the time-dual of which was recently observed in \cite{GalVega} (cf. Proposition~5.11).

\begin{prop}\label{prop for main}
Let $(M, g)$ be a spacetime with compact Cauchy surfaces and with
everywhere non-positive timelike sectional curvatures, $K \le 0$.   If $(M, g )$ is future
timelike geodesically complete then the future causal boundary 
$\mathscr{C}^+$ consists of a single element.  
\end{prop}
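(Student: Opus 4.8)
The plan is to reduce the statement to the following assertion, which I will call the \emph{engulfing property}: for every $q \in M$ the chronological future $I^+(q)$ contains a Cauchy surface. By \cite[Proposition 6.8.1]{HE} the future causal boundary $\mathscr{C}^+$ is exactly the collection of sets $I^-(\gamma)$, where $\gamma$ ranges over future inextendible timelike curves, so $\mathscr{C}^+$ is a single element precisely when $I^-(\gamma) = M$ for every such $\gamma$. Granting the engulfing property, I would fix a future inextendible timelike curve $\gamma$ and an arbitrary point $q$, and choose a Cauchy surface $S' \subset I^+(q)$. Since $\gamma$ is future inextendible it must eventually enter $I^+(S')$, and $S' \subset I^+(q)$ gives $I^+(S') \subseteq I^+(q)$; hence $\gamma$ meets $I^+(q)$, so $q \in I^-(\gamma)$. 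As $q$ is arbitrary, $I^-(\gamma) = M$, and as $\gamma$ is arbitrary the reduction is complete. This part is purely causal-theoretic and uses neither the curvature hypothesis nor completeness.

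The curvature hypothesis and future completeness enter only in establishing the engulfing property, and this is where the real work lies. The guiding example is a spatially closed, decelerating \textsc{flrw} model $-dt^2 + a(t)^2 h$: the strong energy condition forces $a$ to be concave, hence to grow at most linearly, so the conformal time $\int^{\infty} dt/a$ diverges and the past light cones of late points sweep out the entire (compact) spatial slice; equivalently $I^+(q)$ eventually contains a whole slice. In general the mechanism should be the same. Future completeness provides timelike geodesics defined for all proper time; $K \le 0$ controls the focusing of the future timelike geodesics issuing from $q$ --- via the Raychaudhuri equation the induced strong energy condition $\mathrm{Ric}(u,u) \ge 0$ bounds the volume expansion of the radial congruence by $\theta \le n/\tau$, so the geometric mean of the transverse scale factors grows at most linearly --- and compactness of the Cauchy surfaces is what allows this bounded spreading to cover an entire slice.

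To make this precise I would study the Lorentzian distance $\tau_q := d(q,\cdot)$. Global hyperbolicity guarantees $\tau_q$ is finite and continuous and that every point of $I^+(q)$ is joined to $q$ by a maximal future timelike geodesic, while each level set $\{\tau_q = c\}$ is achronal by the reverse triangle inequality. The goal is to show that for suitable $c$ the level set $\{\tau_q = c\}$ is a \emph{compact} Cauchy surface contained in $I^+(q)$, which gives the engulfing property: that it is a Cauchy surface amounts to $\tau_q \to \infty$ along every future inextendible causal curve (here future completeness is used), and its compactness is the crux. \textbf{The main obstacle} is exactly this compactness/engulfing step in the presence of shear: the strong energy condition controls only the \emph{trace} of the expansion, so one must separately rule out that shear lets some transverse direction expand so fast that causal curves cannot circumnavigate the compact slice. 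The pointwise bound $K \le 0$ --- which constrains every sectional (tidal) eigenvalue, not merely their sum --- is what should tame the shear, and I expect this to come from a Rauch-type comparison of the future geodesics from $q$ with their Minkowskian counterparts. A possibly cleaner route is by contradiction: if $q \notin I^-(\gamma)$ then $\partial I^-(\gamma)$ is a nonempty achronal boundary, and a limit-curve or Lorentzian Busemann argument along its null generators, combined with the focusing forced by $K \le 0$ together with future completeness, should force the boundary to be swallowed by $I^-(\gamma)$ --- the desired contradiction. Either way, the curvature condition is used precisely to convert ``future complete and spatially compact'' into ``the future of every point swallows a Cauchy surface,'' which is the time-dual of the phenomenon recorded in \cite{GalVega}.
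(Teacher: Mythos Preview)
Your reduction to the ``engulfing property'' is correct, and your second, contradiction-based route --- assume $\partial I^-(\gamma) \neq \emptyset$ and derive a contradiction from its future-inextendible null generators --- is exactly the paper's approach. The gap is in the mechanism you invoke to obtain the contradiction. Raychaudhuri-type focusing along the null generators uses only the null energy condition (a consequence of $K \le 0$), would require null geodesic completeness (which is not assumed), and needs some initial convergence to produce a focal point; none of this is available here, and a Busemann/limit-curve argument is likewise not what does the work. The actual input, which the paper cites without reproving, is the time-dual of \cite[Theorem~3]{EhrGal} (cf.\ \cite[Theorem~14.45]{BEE}): in a future timelike geodesically complete spacetime with $K \le 0$, every future-inextendible null geodesic eventually enters its own timelike future, and hence cannot lie in an achronal boundary. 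The proof of that result rests on Harris's Lorentzian triangle comparison theorem \cite{Harris}: one builds thin timelike triangles with one side approximating the null ray and compares with Minkowski space, and it is precisely the pointwise bound $K \le 0$ (not merely its Ricci trace) that makes the comparison go through. This is indeed comparison geometry of the kind you anticipated, but it must be applied directly to the null generators of $\partial I^-(\gamma)$, not to the radial timelike congruence from $q$ as in your first approach.

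Your first route, via compactness of the level sets $\{\tau_q = c\}$, is a genuinely different program, but you correctly flag its main obstacle and do not resolve it; the auxiliary claim that $\tau_q \to \infty$ along every future-inextendible causal curve also needs justification beyond future timelike completeness. So as written the proposal identifies the correct contradiction setup but does not supply --- or correctly name --- the comparison theorem that closes it.
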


\begin{proof}  We comment on the proof.   If the conclusion did not hold, then 
there would exist a future inextendible timelike curve $\g$ such that $\d I^-(\g) \ne \emptyset$. 
By properties of achronal boundaries \cite{Penrose},  $\d I^-(\g)$ is an achronal $C^0$ hypersurface ruled by future inextendible null geodesics.  However, by the time-dual of \cite[Theorem 3]{EhrGal} (see also 
\cite[Theorem 14.45]{BEE}), whose proof ultimately relies on Harris's Lorentzian triangle comparison theorem \cite{Harris}, any such null geodesic would enter its own timelike future, thereby violating the achronality of $\d I^-(\g)$.
\end{proof}

Theorem \ref{main} now follows from the preceding results:

\proof[Proof of Theorem \ref{main}]
By Proposition \ref{prop for main}, the future causal boundary $\mathscr{C}^+$ consists of a single point. Hence Proposition \ref{Tipler CMC thm} shows there is a point $p \in M$ such that $M \setminus \big(I^+(p) \cup I^-(p)\big)$ is compact. Therefore the result follows from Bartnik's Theorem. 
\qed

\medskip
\noindent
{\it Remark:} In fact our arguments imply the existence of many CMC Cauchy surfaces.  Let $p$ be the point constructed in the proof of Claim 1.  Then for any point $q \in I^+(p)$,  one has $S \subset I^-(q)$.  It follows that Proposition \ref{Tipler CMC thm} holds for any $q \in I^+(p)$.  From this we can conclude that there is a CMC Cauchy surface passing through each $q \in I^+(p)$.
%point satisfying the conclusion of Proposition \ref{Tipler CMC thm}, and let $\g :(0,\infty) \to M$ be a future complete geodesic with $\g(0) = p$. 

\section{Timelike sectional curvatures in FLRW spacetimes}

As mentioned in Section 1, we show that the assumption of nonpositive timelike sectional curvatures is equivalent to the strong energy condition for FLRW spacetimes. The result holds for arbitrary dimension, but for simplicity we will work in dimension~4.   For some related results, see \cite[Section 7]{AB}.

Let $(M,g)$ be a 4-dimensional spacetime satisfying the Einstein equations for a perfect fluid 
\begin{equation}
R_{\mu\nu} - \frac{1}{2}Rg_{\mu\nu} = 8\pi T_{\mu\nu} = 8\pi \big[(\rho + p)u_\mu u_\nu + pg_{\mu\nu}\big]
\end{equation}
where $u$ is the future pointing unit timelike vector field whose flow is the integral curves of the fluid. Tracing the Einstein equation yields $-R = 8\pi(-\rho + 3p)$. Then we can rewrite the Einstein equations in terms of the Ricci tensor.
\begin{equation}
R_{\mu\nu} = 8\pi \big[(\rho + p)u_\mu u_\nu + pg_{\mu\nu}\big] + 4\pi(\rho - 3p)g_{\mu\nu}.
\end{equation}
Let $\{u, e_1, e_2, e_3\}$ be an orthonormal basis of vector fields on $M$. Then 
\begin{align}
4\pi (\rho + 3p) &= \text{Ric}(u,u) = -K(u,e_1) - K(u,e_2) - K(u,e_3) \label{ric for u}
\\
4\pi (\rho - p)  &=\text{Ric}(e_1,e_1) = K(e_1, u) + K(e_1, e_2) + K(e_1, e_3).
\end{align}
Here $K(v,w)$ denotes the sectional curvature of the plane spanned by $v$ and $w$. Then substituting (\ref{ric for u}) into $12\pi(\rho - p) = \text{Ric}(e_1,e_1) + \text{Ric}(e_2,e_2) +\text{Ric}(e_3,e_3)$ yields
\begin{equation}
8\pi \rho = K(e_1,e_2) + K(e_1, e_3) + K(e_2, e_3).
\end{equation}

Now suppose $(M,g)$ is a FLRW spacetime.  In this case, the energy-momentum tensor necessarily takes the form of a perfect fluid (see \cite[Theorem 12.11]{ON}).  Then the local isotropy of the spatial slices implies 
\begin{equation}
K(u,e_i) = -\frac{4\pi}{3}(\rho + 3p) \:\:\:\: \text{ and } \:\:\:\: K(e_i, e_j) = \frac{8\pi}{3} \rho \: \text{ for }i \neq j.
\end{equation}
Suppose $T = \text{span}\{u, e_1\}$. Let $u' = \a u + \b e_2$ be a unit timelike vector. Then $-\a^2 + \b^2 = -1$. Let $T' = \text{span}\{u',e_1\}$. By isotropy of the spatial slices, $T'$ is completely general. Let $S = \text{span}\{e_1, e_2\}$. Note that $K(T) = K(u,e_i)$ and $K(S) = K(e_1, e_2)$. Then
\begin{align}
-K(T') &= \langle R(u',e_1) e_1, u'\rangle = \langle R(\a u + \b e_2,e_1) e_1, \a u + \b e_2\rangle \notag
\\
&= -\a^2 K(T) + 2\a \b \langle R(u,e_1)e_1, e_2\rangle + \beta^2 K(S). 
\end{align}
From formula (4) in \cite[Proposition 7.42]{ON}, we have $\langle R(u,e_1)e_1,e_2\rangle = 0$. Therefore
\begin{equation}
-K(T') = -\a^2 K(T) + \b^2 K(S).
\end{equation}
Plugging in our expressions for $K(T)$ and $K(S)$ yields
\begin{align}
-K(T') &= \a^2\frac{4\pi}{3}(\rho + 3p) + \beta^2\frac{8\pi}{3}\rho \notag
\\
&=\a^2\frac{4\pi}{3}(\rho + 3p) + (\a^2 - 1)\frac{8\pi}{3}\rho \notag
\\
&= \frac{8\pi}{3}\left[\a^2\left(\frac{3\rho}{2} + \frac{3p}{2} \right) - \rho \right] \label{sect exp}
\end{align}
Assume $K \leq 0$ everywhere holds. Then using $\a^2 \geq 1$ in (\ref{sect exp}) implies $4\pi(\rho + 3p) \geq -K(T') \geq 0$. Therefore $\rho + 3p \geq 0$.  
 Also, since $\a^2$ can take on arbitrarily large values, it follows that 
$\rho + p \geq 0$.
Conversely, suppose $\rho + p \geq 0$ and $\rho + 3p \geq 0$. The former condition along with $\a^2 \geq 1$ implies $-K(T') \geq \frac{4\pi}{3}(\rho + 3p)$. Hence the latter condition implies $K(T') \leq 0$. Since $T'$ was arbitrary, we have $K \leq 0$ everywhere. Thus

\medskip

\emph{
FLRW models have everywhere nonpositive timelike sectional curvatures $(K \leq 0)$ if and only if $\rho + p \geq 0$ and $\rho + 3p \geq 0$.}

\medskip

It is well known that the condition $\rho + p \geq 0$ and $\rho + 3p \geq 0$ is equivalent to the strong energy condition \cite[Exercise 12.10]{ON}. Therefore the strong energy condition is equivalent to assuming $K \leq 0$ everywhere for FLRW models.  Further, we see from formula \eqref{sect exp}, that if $\rho > 0$ and $p \ge 0$, then the timelike sectional curvatures are strictly negative.  
Hence, sufficiently small perturbations of spatially closed future complete FLRW spacetimes satisfying $\rho > 0$ and $p \ge 0$ will admit CMC Cauchy surfaces.  It is perhaps worth noting in this context a result of Rodnianski and Speck  (\cite[Proposition 14.4 ]{RS}), which establishes the existence of a CMC Cauchy surface in spacetimes satisfying the Einstein equations with stiff perfect fluid ($p = \rho > 0$), and having initial data sufficiently close to that of a $\mathbb{T}^3$-FLRW model with stiff perfect fluid.

%\bibliographystyle{amsplain}
%\bibliography{CMC}

\providecommand{\bysame}{\leavevmode\hbox to3em{\hrulefill}\thinspace}
\providecommand{\MR}{\relax\ifhmode\unskip\space\fi MR }
% \MRhref is called by the amsart/book/proc definition of \MR.
\providecommand{\MRhref}[2]{%
  \href{http://www.ams.org/mathscinet-getitem?mr=#1}{#2}
}
\providecommand{\href}[2]{#2}

\end{document}